\theoremstyle{plain}
\newtheorem{theorem}{Theorem}[section]
\newtheorem{lemma}[theorem]{Lemma}
\theoremstyle{definition}
\newtheorem{convention}{Convention}
\newtheorem{definition}[theorem]{Definition}
\newtheorem{example}[theorem]{Example}
\newtheorem{remark}[theorem]{Remark}
\newlength{\neweqnberlength}
\newcommand{\neweqnber}[1]{%
\refstepcounter{equation}%
\settoheight{\neweqnberlength}{\begin{minipage}{0pt}%
\[\]%
\end{minipage}}%
\parbox[c][0pt]{0pt}{\begin{minipage}{0pt}%
\[\label{#1}\]%
\end{minipage}}{\eqref{#1}}%
}
\newcommand{\define}[1]{{\it #1}}
\newcommand{\s}[1]{\{{#1}\}}
\newcommand{\bor}{\;|\;}
\newcommand{\isdefby}{\mathbin{~::=~}}
\newcommand{\duploli}{\rightarrow}
\newcommand{\putype}{\top}
\newcommand{\prodtype}{\times}
\newcommand{\entail}{\vdash}
\newcommand{\canon}[1]{{[\;{#1}\;]}}
\newcommand{\cocanon}[1]{{\{\;{#1}\;\}}}
\newcommand{\prodterm}[1]{{\langle\;{#1}\;\rangle}}
\newcommand{\lproj}[1]{{{\pi_1}{#1}}}
\newcommand{\rproj}[1]{{{\pi_2}{#1}}}
\newcommand{\lprojp}[1]{{\pi_1({#1})}}
\newcommand{\rprojp}[1]{{\pi_2({#1})}}
\newcommand{\puterm}{\ast}
\newcommand{\zeroterm}{{\bf 0}}
\newcommand{\natzero}{{\bar{0}}}
\newcommand{\succterm}{{\it succ}}
\newcommand{\predterm}{{\it pred}}
\newcommand{\iszero}{{\it iszero}}
\newcommand{\trueterm}{{\it t\!t}}
\newcommand{\falseterm}{{\it ff}}
\newcommand{\ifterm}{{\it if}}
\newcommand{\thenterm}{{\it then}}
\newcommand{\elseterm}{{\it else}}
\newcommand{\ifthenelse}[3]{%
{\ifterm\;{#1}\;\thenterm\;{#2}\;\elseterm\;{#3}}}
\newcommand{\monad}{M}
\newcommand{\bit}{{\it bit}}
\newcommand{\nat}{{\it int}}
\newcommand{\alg}[1]{\mathcal{#1}}
\newcommand{\eqac}{\mathbin{\simeq_{AC}}}
\newcommand{\eqax}{\mathbin{\simeq_{\it ax}}}
\newcommand{\denot}[1]{{[\![\,{#1}\,]\!]}}
\newcommand{\cat}[1]{\mathcal{#1}}
\newcommand{\kcat}[2]{\cat{#1}_{#2}}
\newcommand{\prodfunc}{\times}
\newcommand{\punit}{{\bf 1}}
\newcommand{\internalhom}{\Rightarrow}
\newcommand{\unit}{\eta}
\newcommand{\mult}{\mu}
\newcommand{\strength}{t}
\newcommand{\linealcat}{\cat{C}_{l}}
\newcommand{\wlinealcat}{\cat{C}^w_{l}}
\newcommand{\Set}{{\bf Set}}
\newcommand{\ket}[1]{{|{#1}\rangle}}
\title{Semantics of a Typed Algebraic Lambda-Calculus}
\author{Beno\^{\i}t Valiron
\institute{Laboratoire d'Informatique de Grenoble\\
Universit\'e Joseph Fourier\\Grenoble\\ France\\
\email{benoit.valiron@monoidal.net}}}
\begin{document}

\maketitle

\begin{abstract}
  Algebraic lambda-calculi have been studied in various ways, but
  their semantics remain mostly untouched. In this paper we propose a
  semantic analysis of a general simply-typed lambda-calculus endowed
  with a structure of vector space. We sketch the relation with two
  established vectorial lambda-calculi. Then we study the problems
  arising from the addition of a fixed point combinator and how to
  modify the equational theory to solve them. We sketch an algebraic
  vectorial PCF and its possible denotational interpretations.
  
  \noindent{\bf Keywords}: typed
  lambda-calculus, module over ring and semi-ring, fixpoints,
  semantics, computational model.
\end{abstract}

\section{Introduction}

Notions of lambda-calculus with vectorial structures have at least
three distinct origins.
A first line of
work~\cite{breazu91polymorphic,blanqui99calculus,barbanera93combining},
from which the term ``algebraic lambda-calculus'' comes from, focuses
on general algebraic rewrite systems and studies the conditions needed
for obtaining properties such as confluence or strong normalization.
The second one is the calculus of Vaux \cite{vaux08algebraic},
building up upon the work of Ehrhard and Regnier
\cite{ehrhard03differential}. The goal here is to capture a notion of
differentiation within lambda-calculus.
Finally, algebraic lambda-calculus also arises in the work of Arrighi
and Dowek \cite{arrighi08linear} where they define a lambda-calculus
oriented towards quantum computation, in the style of Van
Tonder~\cite{tonder04lambda}.

Both \cite{arrighi08linear} and \cite{vaux08algebraic} are concerned
with a lambda-calculus endowed with a structure of vector space. They
both
acknowledge
the fact that for an untyped lambda-calculus, a naive rewrite system
renders the language inconsistent, as any term can be made equal to
the zero of the vectorial space of terms. However, coming from
different backgrounds, they provide different solutions to the
problem. In \cite{arrighi08linear}, the rewriting system is
restrained in order to avoid unwanted equalities of terms. In
\cite{vaux08algebraic}, the rewriting system is untouched, but the
scalars over which the vectorial structure is built are made into a
semiring with particular properties, making the system consistent.
Finally, \cite{sts09} shows that a type
system enforcing strong normalization is also a mean of solving the
problem.

In this paper, we turn to the question of a semantics for a
lambda-calculus endowed with a structure of vector space (or more
generally, a structure of module). Starting
with an untyped lambda-calculus and a naive rewrite system, we recall
where inconsistencies occur. Then we construct a simply-typed version
of the untyped language together with an equational description. In
this restricted setting, the rewrite system is sound, and we describe
a denotational semantics using a computational model a la
Moggi~\cite{moggi91notions}. We also show how one can relate this
language to the one described in \cite{arrighi08linear} and
\cite{vaux08algebraic}.  We then re-read the problems that occurred in
the untyped world, and find a simple solution for making the system
sound again in the presence of diverging terms, finding an agreement
with the solution in \cite{vaux08algebraic}. The solution in this
paper goes however a step further, proposing a denotational framework
for the calculus.

\subsection{An untyped calculus}
\label{def:untypedlang}

Consider a ring $(\alg{A},+,0,\times,1)$. Elements of $\alg{A}$ are
called \define{scalars}. We define a call-by-value language as
follows.
\begin{align*}
  s,t &\isdefby x\bor \lambda x.s\bor st
  \bor s+t\bor \alpha\cdot s\bor\zeroterm
  \bor\canon{s}\bor\cocanon{s},
  \\
  u,v &\isdefby x\bor \lambda x.u\bor uv \bor\canon{s},
\end{align*}
where $\alpha$ ranges over $\alg{A}$, and where $x$ ranges over a
fixed set of variables. Terms of the form $s,t$ are called
\define{computations} and terms of the form $u,v$ are called
\define{values}. We define variable substitution as usual and consider
terms up to $\alpha$-equivalence. The meanings of the unusual terms are
explained in the next section.

\subsection{A naive reduction system}
\label{sec:naivered}

A very naive reduction is to make the set of terms into a module over
a ring $\alg{A}$, with the term $\zeroterm$ as unit of the addition.
More precisely, a term $s$ reduces to a term $t$, written $s\to t$, if
there exist terms $s'$ and $t'$ respectively equivalent modulo
congruence, associativity and commutativity of $+$ to $s$ and $t$ such
that the relation $s'\to t'$ is derived from the rules of
Table~\ref{tab:betared}. Although we do not describe formally the
system here (a complete development is done in
Section~\ref{sec:opsemst}), the reduction should be straightforward
enough for the remainder of the discussion.

In particular, the addition is commutative and associative, the terms
$t-t$ and $0\cdot t$ equate the term $\zeroterm$.  All term
constructs are linear with respect to addition and scalar
multiplication except $\canon{{}_-}$, 
which ``lifts'' a computation
into a value. One can unlift it using $\cocanon{{}_-}$, and
retrieve the computation. Finally, the system is call-by-value: the
beta-reduction $(\lambda x.s)v$ reduces to $s[x\leftarrow v]$ only if
$v$ is a value.

\begin{table*}[tbh]
\begin{center}
\begin{tabular}{c}
\hline
\hline
\mbox{\begin{minipage}{4.5in}
\def\mynl{\\[1.3ex]}
\begin{center}
\vspace{1ex}
Group $E$
\mynl
\mbox{$\begin{array}{rcl@{\qquad}rcl@{\qquad}rcl@{\qquad}rcl}
    \alpha\cdot \zeroterm &\to& \zeroterm
    & \zeroterm+s&\to& s
    & \alpha\cdot(\beta\cdot s)&\to&
    (\alpha\beta)\cdot s
    \\
    \hspace{-5ex}(*)\quad0\cdot s&\to&\zeroterm
    & 1\cdot s&\to& s
    & \alpha\cdot(s+t)&\to& \alpha\cdot s+\alpha\cdot t
  \end{array}$}
\mynl
Group $F$
\mynl
\mbox{$\begin{array}{r@{}lcr@{}lcr}
    \alpha\cdot{}&s &+& \beta\cdot{}&s&\to &(\alpha+\beta)\cdot s
    \\ 
    \alpha\cdot{}&s &+& &s &\to& (\alpha+1)\cdot s
    \\ 
    &s &+ &&s &\to& (1+1)\cdot s
  \end{array}$}
\mynl
Group $A$
\mynl
\mbox{$\begin{array}{@{}
      r@{~}c@{~}l
      @{~\quad}
      r@{~}c@{~}l
      @{~\quad}
      r@{~}c@{~}l
      @{~\quad}
      r@{~}c@{~}l
      @{}}
    (s+t)r &\to& sr + tr
    & (\alpha\cdot s)r  &\to& \alpha\cdot(sr)
    & \zeroterm r &\to& \zeroterm
    \\
    r(s+t) &\to& rs + rt
    & r(\alpha\cdot s) &\to& \alpha\cdot(rs)
    & r\zeroterm &\to& \zeroterm
    \\
    \lambda x.(s+t) &\to&  \lambda x.s + \lambda x.t
    & \lambda x.(\alpha\cdot s)&\to&  \alpha\cdot \lambda x.s
    & \lambda x.\zeroterm &\to& \zeroterm
    \\
    \cocanon{s+t} &\to&  \cocanon{s}+\cocanon{t}
    & \cocanon{\alpha\cdot s} &\to&  \alpha\cdot \cocanon{s}
    & \cocanon{\zeroterm} &\to& \zeroterm
  \end{array}$}
\mynl
Group $B$
\mynl
\mbox{$\begin{array}{c@{\qquad\qquad}c}
    (\lambda x.s)v\to s[x\leftarrow v]
    &
    \cocanon{\canon{s}} \to s
  \end{array}$}
\end{center}
\end{minipage}}
\\
\hline
\end{tabular}
\end{center}
\caption{Reduction system $L$.}
\label{tab:betared}
\end{table*}

For example, the term $(\lambda fx.(fx)x)(y+z)$ reduces to
$\lambda f.(fy)y+\lambda f.(fz)z$. On the contrary, the
computation $(\lambda xf.(f\cocanon{x})\cocanon{x})\canon{y+z}$
reduces to the sum of terms
$
\lambda f.(fy)y+\lambda f.(fz)y+\lambda f.(fy)z+\lambda f.(fz)z.
$

It is possible to build the same term constructs as with the regular
untyped lambda-calculus~\cite{barendregt84lambda}. For example, the
product $\prodterm{s,t}$ of two terms $s$ and $t$ can be encoded as
$\lambda f.(fs)t$, the first projection $\lprojp{s}$ of a pair $s$ as
the term $s\,(\lambda xy.x)$ and the second projection $\rprojp{s}$ as
$s\,(\lambda xy.y)$. Note that, since all usual lambda-term constructs
are linear with respect to addition and scalar multiplication in each
variable, the new term constructs $\prodterm{-,-}$, $\lproj{}$,
$\rproj{}$ are also linear in each variable. In particular, one can
check that $\prodterm{s+s',t+t'} =
\prodterm{s,t}+\prodterm{s',t}+\prodterm{s,t'}+\prodterm{s',t'}$.
These term constructs are introduced in the simply-typed
lambda-calculus of Section~\ref{sec:st}.

\subsection{Breaking consistency}
\label{sec:broken}

Although the set of requirements looks reasonable, as was shown in
\cite{arrighi08linear}, the equational system is not sound. Indeed,
given any term b one can construct the term
$
Y_b = \cocanon{(\lambda x.\canon{\cocanon{xx}+b})
(\lambda x.\canon{\cocanon{xx}+b})}
$
verifying the reduction
\begin{equation}
  \label{eq:yb}
  Y_b \to Y_b + b.
\end{equation}
This creates a problem of consistency, as enlightened in the following
sequence of equalities:
\begin{equation}
  \label{eq:broken}
  \zeroterm = Y_b-Y_b = (Y_b+b)-Y_b = b+(Y_b-Y_b) = b.
\end{equation}
This successfully shows that any term can be equated to
$\zeroterm$, rendering the system inconsistent.

\section{A simply-typed lambda-calculus}
\label{sec:st}

The problem occurring in Section~\ref{sec:broken} is due to the
possibility of constructing diverging terms.  In this section we study
a simply-typed, algebraic lambda-calculus. Equipped with a naive
reduction system, it verifies strong normalization. This allows us in
Section~\ref{sec:addingdiv} to analyze carefully the pitfalls occurring
when adding divergence.

\begin{table*}[tbh]
\begin{center}
\def\mysep{{\Rightarrow}}
\begin{tabular}{cc}
\hline
\hline
$\begin{array}{@{}l@{}lll}
  &
  &&
  \Delta,x:A\entail x:A,
  \\
  &
  &&
  \Delta\entail \puterm:\putype,
  \\
  &
  &&
  \Delta\entail \zeroterm:A
  \\
  &\Delta,x:A\entail s:B
  &\mysep&
  \Delta\entail\lambda x.s:A\duploli B,
\end{array}$
&
$\begin{array}{@{}l@{}lll}
  &\begin{array}{@{}ll@{}}
    \Delta\entail s:A\duploli B
    \\
    \Delta\entail t:A
  \end{array}\bigg\}
  &\mysep&
  \Delta\entail st:B,
\end{array}$
\\
$\begin{array}{@{}l@{}lll}
  &\Delta\entail s:A\prodtype B
  &\mysep&
  \Delta\entail \lprojp{s}:A,
  \\
  &\Delta\entail s:A\prodtype B
  &\mysep&
  \Delta\entail \rprojp{s}:B,
\end{array}$
&
$\begin{array}{@{}l@{}lll}
  &\begin{array}{@{}ll@{}}
    \Delta\entail s:A
    \\
    \Delta\entail t:B
  \end{array}\bigg\}
  &\mysep&
  \Delta\entail \prodterm{s,t}:A\prodtype B,
\end{array}$
\\
$\begin{array}{@{}l@{}lll}
  &\Delta\entail s:A
  &\mysep&
  \Delta\entail \alpha\cdot s:A,
  \\
  &\Delta\entail s:\monad{A}
  &\mysep&
  \Delta\entail \cocanon{s}:A,
  \\
  &\Delta\entail s:A
  &\mysep&
  \Delta\entail \canon{s}:\monad{A}.
\end{array}$
&
$\begin{array}{@{}l@{}lll}
  &\begin{array}{@{}ll@{}}
    \Delta\entail s:A
    \\
    \Delta\entail t:A
  \end{array}\bigg\}
  &\mysep&
  \Delta\entail s+t:A,
\end{array}$
\\
\hline
\end{tabular}
\end{center}
\caption{Typing rules.}
\label{tab:typrules}
\end{table*}

\begin{definition}
  \label{def:stlc}\rm
  We suppose the existence of a ring $\alg{A}$, containing a
  multiplication and an addition. A simply-typed, call-by-value,
  algebraic lambda-calculus called the \define{computational algebraic
  lambda-calculus} is constructed as follows. Types are of
  the form
  \begin{align*}
    A,B &\isdefby \iota\bor A\duploli B\bor A\prodtype B\bor\putype
    \bor\monad{A},
    \intertext{%
      where $\iota$ ranges over a set of type constants.
      Terms again come in two flavors:%
    }
    s,t &\isdefby x\bor \lambda x.s\bor st
    \bor \prodterm{s,t} \bor \lprojp{s}\bor\rprojp{s}\bor\puterm\bor
    s+t\bor \alpha\cdot s\bor\zeroterm
    \bor\canon{s}\bor\cocanon{s},
    \\
    u,v &\isdefby x\bor \lambda x.u\bor uv
    \bor \prodterm{u,v} \bor \lprojp{u}\bor\rprojp{u}\bor\puterm\bor
    \canon{s},
  \end{align*}
  where $\alpha\in\alg{A}$. Terms of the form $s,t$ are called
  \define{computations} and terms of the form $u,v$ are called
  \define{values}. The term $\canon{s}$ is the closure of a
  computation: such a term is not linear and can be duplicated ``as
  it''. The term construct $\cocanon{-}$ breaks such a closure and
  ``runs'' the computation.
  
  We define the notions of typing context $\Delta$ and of typing
  derivation $\Delta\entail s:A$ in the usual
  way~\cite{pierce02types}.
  Terms are considered
  up to $\alpha$-equivalence, and valid typing derivations are built
  using the rules of Table~\ref{tab:typrules}.
\end{definition}

\subsection{Small-step semantics}
\label{sec:opsemst}

The type system is valid with respect to the reduction system
described in Table~\ref{tab:betared}, modulo the addition of rules
for the added term constructs concerning the product. In the
following, we use the terminology of~\cite{arrighi08linear}.

\begin{definition}\rm
  Given any relation $R$ on terms, we say that it is a
  \define{call-by-value congruent relation} if for all pairs
  $(s,s'),(t,t')\in R$, the pairs $(st,st')$, $(st,s't)$,
  $(s+t,s+t')$, $(s+t,s'+t)$, $(\prodterm{s,t},\prodterm{s,t'})$,
  $(\prodterm{s,t},\prodterm{s',t})$, $(\rproj{s},\rproj{s'})$,
  $(\lproj{s},\lproj{s'})$, $(\alpha\cdot s,\alpha\cdot s')$ and
  $(\cocanon{s},\cocanon{s'})$ are in $R$.  We say that $R$ is
  \define{congruent} if it is call-by-value congruent and if for all
  pairs $(s,s')\in R$, we also have $(\lambda x.s,\lambda
  x.s')$, $(\canon{s},\canon{s'})$ in $R$.
\end{definition}

\begin{definition}\label{def:ac}\rm
  We define $\eqac$ to be the smallest congruent, equivalent relation
  on terms satisfying $s+t\eqac t+s$ and $r+(s+t)\eqac(r+s)+t$.
  We say that a relation $R$ is \define{consistent with $\eqac$} if
  $s\eqac s'Rt'\eqac t$ implies $sRt$.
\end{definition}

\begin{definition}\label{def:neutral}\rm
  A \define{normal} term $s$ is such that there does not exist
  a term $t$ with $s\to t$.  A \define{rewrite sequence} is a
  sequence $(s_i)_i$ of terms such that for all $i$, either $s_i\to
  s_{i+1}$ or $s_i$ is normal and $i$ is the last index of the
  sequence.
\end{definition}

\begin{definition}\label{def:betared}\rm
  We define the call-by-value reduction systems $E,F,A$ and $B$ of
  terms as the smallest call-by-value congruent relations consistent
  with $\eqac$, satisfying the rules in Table~\ref{tab:betared} where
  $B$ is augmented with the rules $\lproj{\prodterm{u,v}}\to u$ and
  $\rproj{\prodterm{u,v}}\to v$.
  In all the given rules,
  the terms $u,v$ are assumed to be values.  We write $L$ for the
  relation $A\cup B\cup E\cup F$.
\end{definition}

\begin{convention}\rm
  If $R$ is a relation, we write $s\to_R t$ in place of $(s,t)\in R$.
  We simply write $\to$ in place of $\to_L$, and if $s\to t$, we say
  that $s$ reduces to $t$. We denote with $\to^*_R$ the reflexive,
  transitive closure of $\to_R$.
\end{convention}

\begin{lemma}[Substitution]
  \label{lem:stsubst}
  Let $\Delta\entail v:A$ and $\Delta,x:A\entail s:B$ be two valid
  typing derivations, where $v$ is a value. Then $\Delta\entail
  s[x\leftarrow v]:B$ is a valid typing derivation.
\end{lemma}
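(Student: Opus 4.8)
The plan is to prove this standard substitution lemma by structural induction on the typing derivation of $\Delta, x:A \entail s:B$. This is the natural strategy because the typing rules in Table~\ref{tab:typrules} are syntax-directed, so each term construct corresponds to a unique last rule applied in the derivation, and the inductive hypothesis will supply the well-typedness of substituted subterms. Let me sketch the key steps.

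First I would set up the induction. In the base cases, $s$ is one of the atomic terms. When $s = x$, we have $B = A$ and $s[x \leftarrow v] = v$, so the conclusion follows directly from the hypothesis $\Delta \entail v:A$. When $s = y$ is a variable distinct from $x$, substitution is the identity and the derivation $\Delta \entail y:B$ is obtained by weakening (dropping the unused $x:A$ from the context). The remaining base cases $s = \puterm$ and $s = \zeroterm$ are similarly handled by weakening, since these typing rules do not depend on the variable being substituted.

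For the inductive step I would proceed rule by rule over the remaining constructs: application $st$, abstraction $\lambda y.s$, pairing $\prodterm{s,t}$, projections $\lprojp{s}$ and $\rprojp{s}$, scalar multiplication $\alpha \cdot s$, addition $s+t$, and the monadic constructs $\canon{s}$ and $\cocanon{s}$. In each case the substitution commutes with the term former — for instance $(st)[x \leftarrow v] = (s[x\leftarrow v])(t[x\leftarrow v])$ — so I would invoke the inductive hypothesis on each immediate subterm and then reapply the same typing rule to assemble the result. The genuinely delicate case is the abstraction $\lambda y.s$, where I must ensure the bound variable $y$ does not clash with $x$ or capture a free variable of $v$; here I would appeal to the convention that terms are taken up to $\alpha$-equivalence to rename $y$ so that $y \neq x$ and $y$ is not free in $v$, after which the inductive hypothesis applies to $s$ in the extended context $\Delta, y:C, x:A$.

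I expect the main obstacle to be bookkeeping rather than conceptual difficulty: the context $\Delta$ must be handled carefully in the variable and abstraction cases, since substitution operates under the binder and the hypothesis $\Delta \entail v:A$ must be weakened to the larger context $\Delta, y:C$ to match the inductive call. A standard supporting fact — weakening, namely that adding unused variables to the context preserves derivability — is used implicitly throughout, particularly to promote the typing of $v$ into the contexts arising beneath abstractions. Given strict adherence to the $\alpha$-equivalence convention, each of these obstacles dissolves, and the proof reduces to routine reapplication of the typing rules.
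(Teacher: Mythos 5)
Your proposal is correct and follows exactly the paper's proof, which is stated in one line as structural induction on the typing derivation of $\Delta,x:A\entail s:B$; your write-up simply fills in the routine details (base cases, weakening, and $\alpha$-renaming in the abstraction case) that the paper leaves implicit. No discrepancy to report.
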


\begin{proof}
  By structural induction on the typing derivation of
  $\Delta,x:A\entail s:B$.
\end{proof}

\begin{lemma}[Subject reduction]
  \label{lem:stsubred}
  Let $\Delta\entail s:A$ be a valid typing judgment such that $s\to
  t$. Then $\Delta\entail t:A$ is also valid.
\end{lemma}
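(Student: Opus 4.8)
The plan is to prove the statement by induction on the derivation establishing $s \to t$, that is, on the way the pair $(s,t)$ is generated as a member of the relation $L = A \cup B \cup E \cup F$. Since $L$ is by Definition~\ref{def:betared} the smallest call-by-value congruent relation consistent with $\eqac$ that contains the rules of Table~\ref{tab:betared} (together with the two projection rules), every reduction $s \to t$ arises in one of three ways: as an instance of a base rule, as a congruence step applied to a reduction of a subterm, or via consistency with $\eqac$. I would treat these three sources as the cases of the induction.

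For the base rules I would proceed by a case analysis over the rules of Groups $E$, $F$, $A$ and $B$. In each case one inverts the typing derivation of $s$ to recover the types of the subterms involved, and then reassembles a derivation of $t:A$ from the typing rules of Table~\ref{tab:typrules}. For instance, for $\alpha \cdot \zeroterm \to \zeroterm$ the conclusion follows at once from the rule $\Delta \entail \zeroterm : A$; for the distributivity rule $(s+t)r \to sr + tr$ one extracts $\Delta \entail s : A \duploli B$, $\Delta \entail t : A \duploli B$ and $\Delta \entail r : A$, then types each application and their sum; and for a Group $F$ rule such as $\alpha \cdot s + s \to (\alpha+1)\cdot s$ the typing rule for $+$ forces both summands to share the type $A$, so $(\alpha + 1) \cdot s$ again has type $A$. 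The only genuinely non-syntactic case is the $\beta$-reduction $(\lambda x.s)v \to s[x \leftarrow v]$, which is exactly where the Substitution Lemma (Lemma~\ref{lem:stsubst}) is needed: inverting the typing of the application yields $\Delta, x:A \entail s : B$ and $\Delta \entail v : A$ with $v$ a value, and Lemma~\ref{lem:stsubst} then gives $\Delta \entail s[x \leftarrow v] : B$. The rules $\cocanon{\canon{s}} \to s$ and $\lproj{\prodterm{u,v}} \to u$ are handled by direct inversion in the same spirit.

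For the congruence case, I would observe that if $(s,t)$ is obtained by placing a reduction $s' \to t'$ of an immediate subterm under one of the congruence constructors, then inverting the typing of $s$ at its head constructor produces a typing of $s'$, the induction hypothesis supplies a typing of $t'$ of the same type, and the corresponding typing rule rebuilds $\Delta \entail t : A$. This is uniform across all the constructors listed in the definition of a congruent relation. Finally, for consistency with $\eqac$ I would first establish the auxiliary fact that $\eqac$ preserves typing: since $\eqac$ is generated by the commutativity and associativity of $+$ together with congruence, and the typing rule for $+$ is symmetric in its two arguments (both carrying the type of the sum), a short induction on the generation of $\eqac$ shows that $s \eqac s'$ and $\Delta \entail s : A$ imply $\Delta \entail s' : A$. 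A reduction arising through consistency factors as $s \eqac s' \to t' \eqac t$, and combining this auxiliary fact on both ends with the inner typing preservation closes the case.

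The main obstacle I anticipate is bookkeeping rather than conceptual difficulty: the base rules are numerous, so the case analysis is long, and one must take care that the interaction between congruence and the ``modulo $\eqac$'' convention is handled cleanly. Isolating the type-invariance of $\eqac$ as a separate step is what keeps that interaction manageable; the substitution case, though the most interesting, is immediate once Lemma~\ref{lem:stsubst} is invoked.
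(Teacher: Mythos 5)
Your proposal is correct and follows essentially the same route as the paper: a case analysis over the reduction rules (phrased as induction on the derivation of $s\to t$, which amounts to the paper's structural induction on $s$ plus inspection of the rules), with Lemma~\ref{lem:stsubst} invoked precisely for the $\beta$-rule of group $B$. Your explicit isolation of type-invariance under $\eqac$ is a detail the paper's one-line proof leaves implicit, but it does not change the approach.
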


\begin{proof}
  Proof by structural induction on the term $s$ and inspection of the
  reduction rules, using Lemma~\ref{lem:stsubst} for the first rule of
  group B.
\end{proof}

\begin{theorem}[Safety]
  Suppose that ${}\entail s:A$ is a valid typing judgment. Then
  either $s\to t$ with ${}\entail t:A$, or $s$ is 
  normal.
\end{theorem}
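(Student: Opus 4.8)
The plan is to establish a progress-style statement: a well-typed closed computation is either normal or admits a reduction that preserves its type. Since subject reduction (Lemma~\ref{lem:stsubred}) already guarantees that any reduct of a term of type $A$ again has type $A$, the real content is \emph{progress}: every non-normal closed term has \emph{some} reduct. I would therefore reduce the theorem to showing that a closed, well-typed term which is not a value and not otherwise stuck can always fire one of the rules of Table~\ref{tab:betared}. The natural method is structural induction on the typing derivation of ${}\entail s:A$, with a case analysis on the last typing rule applied.

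First I would dispatch the base cases. A closed term cannot be a variable, so the variable axiom is vacuous; the terms $\puterm$, $\zeroterm$, and abstractions $\lambda x.s$ are already normal (or are values), so nothing needs to fire. For the compound cases I would invoke the induction hypothesis on each immediate subterm: each is either normal or reduces. If some subterm reduces, then by the call-by-value congruence rules of Definition~\ref{def:betared} the whole term reduces, and we are done. So the crux is the situation where \emph{all} subterms are already normal, and I must argue that $s$ is then either itself normal or fires a top-level rule from group $B$ (or $A$, $E$, $F$).

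The heart of the argument is a canonical-forms analysis of normal terms at each type, carried out alongside the induction. For an application $st$ with $s,t$ normal, I would argue that a normal term of arrow type $A\duploli B$ must, after using group $A$ to push additions and scalars outward and group $E,F$ to collect them, be decomposable so that either a group-$A$ rule applies (when the head is a sum, a scalar multiple, or $\zeroterm$) or the head is an abstraction $\lambda x.u$, in which case the group-$B$ beta rule $(\lambda x.s)v\to s[x\leftarrow v]$ fires provided $t$ is a value. Similarly, for $\lprojp{s}$ or $\rprojp{s}$ with $s$ normal of product type, $s$ is either a linear combination (group $A$ applies) or a pair $\prodterm{u,v}$ (the augmented group-$B$ projection rules apply); and for $\cocanon{s}$ with $s$ normal of type $\monad{A}$, $s$ is either a linear combination (group $A$/$E$ applies) or a closure $\canon{s'}$ (the rule $\cocanon{\canon{s}}\to s$ fires).

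The main obstacle will be making the canonical-forms claim precise in the presence of the additive and scalar structure, i.e.\ showing that a \emph{normal} closed term of arrow (resp.\ product, resp.\ $\monad{A}$) type really does have an abstraction (resp.\ pair, resp.\ closure) at its head once all group-$A$, $E$, $F$ rules are saturated. Because terms are taken up to $\eqac$ and the rules are only required to be consistent with $\eqac$, I must be careful that ``no group-$A/E/F$ rule applies'' genuinely forces the head to be an introduction form rather than a stuck sum. The clean way to handle this is to prove, as a strengthened induction hypothesis, a classification lemma stating exactly which shapes normal terms of each type can take, and then read off progress from that classification; the beta case then additionally relies on the fact that in a normal application $st$ of a function whose argument position demands a value, $t$ being normal of the appropriate type is itself a value, so the call-by-value side condition is met.
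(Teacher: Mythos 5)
You have read this theorem as the standard progress theorem of type safety, but the statement here is strictly weaker than that, and the paper's proof is correspondingly much shorter than your plan. By Definition~\ref{def:neutral}, ``normal'' means precisely that $s$ has \emph{no} reduct at all --- it does not mean ``value'' or ``canonical form.'' So the dichotomy ``either $s\to t$ or $s$ is normal'' holds for \emph{every} term by definition, and the entire content of the theorem is that when a reduction does fire, the type is preserved, which is exactly Lemma~\ref{lem:stsubred}. Accordingly, the paper disposes of the theorem by a case distinction on the structure of $s$ invoking subject reduction; no canonical-forms analysis, no classification of normal forms, and no induction loaded with a strengthened hypothesis is needed. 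Your program proves a genuinely stronger statement --- a characterization of closed normal forms at each type, from which one could conclude, e.g., that closed normal terms of type $\monad{A}$ are linear combinations of closures $\canon{s}$ --- which would be the right tool if ``normal'' in the statement were replaced by ``value,'' but here it is unnecessary work.

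Moreover, within your stronger program, one step fails as stated: the final claim that in an application $st$, ``$t$ being normal of the appropriate type is itself a value.'' In this calculus normal closed terms need not be values. For instance, $\lambda x.x+\lambda x.\puterm$ is closed, well-typed at $\putype\duploli\putype$, and normal --- group $F$ does not apply because the summands differ (even up to $\eqac$), group $A$ rewrites $\lambda x.(s+t)$ but not sums of distinct abstractions, and reduction does not descend under $\lambda$ --- yet it is not generated by the value grammar; the same goes for $\alpha\cdot v$ with $\alpha\neq 0,1$, and for $\zeroterm$. The repair lives in your own framework: group $A$ of Table~\ref{tab:betared} is linear in the \emph{argument} position too, via $r(s+t)\to rs+rt$, $r(\alpha\cdot s)\to\alpha\cdot(rs)$ and $r\zeroterm\to\zeroterm$, so when the normal argument is a nontrivial linear combination the application still reduces, and the beta rule is needed only when the argument is a single value. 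Your classification lemma would therefore have to say that closed normal terms are exactly such linear combinations of values, and the application case must split accordingly --- but, again, none of this is required for the theorem as actually stated.
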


\begin{proof}
  By case distinction on the structure of $s$, using
  Lemma~\ref{lem:stsubred}.
\end{proof}

As for the simply-typed lambda-calculus, the reduction system is
normalizing. The proof uses the fact that the rewrite system consists
of two parts: the rules of groups E,F,A and the rules of group B.

\begin{lemma}
  \label{lem:efanorm}
  Let $s$ be any term. There exists an index $n_s$ such that any
  rewrite sequence $(s_i)_i$ in $E\cup F\cup A$ with $s_0=s$ consists
  of at most $n_s$ elements.
\end{lemma}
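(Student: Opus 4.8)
The plan is to prove strong normalization of the $E\cup F\cup A$ fragment by exhibiting a \emph{termination measure}: a function assigning to each term a value in a well-founded order that strictly decreases under every rule of groups $E$, $F$ and $A$. Crucially, none of these rules is a $\beta$-style rule (those live in group $B$), so no substitution occurs and no subterm is ever duplicated in a way that blows up the overall size. Inspecting the rules, each one either eliminates a scalar/addition/zero construct or pushes an operator past $+$, $\alpha\cdot$ or $\zeroterm$; the total number of symbol occurrences does not obviously drop (e.g. $\alpha\cdot(s+t)\to\alpha\cdot s+\alpha\cdot t$ increases size), so a naive size count will not work directly.

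The key step, then, is to design the measure carefully. I would assign to each term a natural number (or a tuple ordered lexicographically) that tracks, intuitively, how deeply algebraic operators $+$, $\alpha\cdot$, $\zeroterm$ sit \emph{inside} the other constructs $\lambda x.-$, $(-)(-)$, $\prodterm{-,-}$, $\lproj{-}$, $\rproj{-}$ and $\cocanon{-}$. The group $A$ rules all move an algebraic constructor outward (commuting it past a term former), so a weighting scheme that charges each algebraic operator an exponentially larger cost the deeper it is nested will strictly decrease under every group $A$ rewrite. For groups $E$ and $F$, which genuinely simplify purely algebraic expressions (collapsing nested scalars, absorbing $\zeroterm$, merging like summands), I expect an ordinary size-style count on the algebraic layer to decrease, and I would fold this into a lower component of the lexicographic tuple so that group $A$ steps, which may raise the algebraic size, are dominated by the strict decrease in the higher component.

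Concretely, I would define $\mu(s)$ by structural recursion, using a base weight for variables, $\puterm$ and $\zeroterm$, and combining subterm measures so that (i) wrapping a term in $\lambda x.-$, application, projection, pairing or $\cocanon{-}$ multiplies the contribution of any enclosed algebraic structure by a fixed factor strictly greater than one, and (ii) the algebraic combinators $+$ and $\alpha\cdot$ contribute additively. One then verifies rule by rule that $\mu(s)>\mu(t)$ whenever $s\to t$ in $E\cup F\cup A$, treating $\eqac$-equivalent terms as having equal measure (which holds because $+$ is handled symmetrically and associatively in the definition). Since $\mu$ takes values in $\mathbb{N}$ (or $\mathbb{N}^k$ with the well-founded lexicographic order) and strictly decreases, no infinite rewrite sequence can exist, and the bound $n_s$ may be taken to be $\mu(s)$ itself. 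Finally, because the measure is invariant under $\eqac$ and decreases under the call-by-value congruence closure—each congruence rule only propagates a strict decrease in a subterm to a strict decrease in the whole, given the monotonicity of $\mu$ in its arguments—the argument extends from the raw rules to the full relation of Definition~\ref{def:betared}.

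The main obstacle I anticipate is tuning the weights so that a \emph{single} monotone measure works uniformly for all three groups at once. Group $A$ increases algebraic size while group $E$/$F$ decreases it, so the two effects pull in opposite directions; the lexicographic separation must be arranged so that every group $A$ step strictly lowers the dominant component while every group $E$/$F$ step either lowers the dominant component or holds it fixed and lowers a subordinate one. Getting the multiplicative factor in the nesting weight large enough that commuting an operator outward always wins—even when it duplicates $\alpha\cdot$ across a sum as in $\alpha\cdot(s+t)\to\alpha\cdot s+\alpha\cdot t$—is the delicate calculation, and it is where I would spend most of the verification effort.
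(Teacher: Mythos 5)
Your overall architecture matches the paper's: define a measure on terms, invariant under $\eqac$, with values in a well-founded (lexicographically ordered) set, show it strictly decreases under every rule of $E\cup F\cup A$ and propagates through the call-by-value congruence closure, and extract the bound $n_s$ from the measure. The paper does exactly this with two quantities, a ``plus-number'' ${\it np}$ and a ``scalar-complexity'' ${\it cx}$, compared lexicographically. However, the concrete weighting scheme you sketch would fail, and on rules you do not flag. The application rules $(s+t)r\to sr+tr$ and $r(s+t)\to rs+rt$ \emph{duplicate} the arbitrary subterm $r$. Under your design --- additive contributions for $+$ and $\alpha\cdot{}$, each non-algebraic wrapper multiplying enclosed contributions by a fixed factor $k>1$ --- the left-to-right gain is $(k-1)$ times the fixed weight of the displaced $+$, while the loss is an extra $k$ times the algebraic content of $r$, which is unbounded; no fixed factor can win. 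What makes the paper's measure work is full multiplicativity at the binary constructs: ${\it np}(st)={\it np}(\prodterm{s,t})=2\,{\it np}(s)\,{\it np}(t)$ together with ${\it np}(s+t)=1+{\it np}(s)+{\it np}(t)$, so the additive constant $1$ carried by the sum gets multiplied by $2\,{\it np}(r)\geq 2$ on the left and pays exactly for the duplicated copy of~$r$.

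The rule $\alpha\cdot(s+t)\to\alpha\cdot s+\alpha\cdot t$, which you single out as the delicate case, also cannot be repaired by enlarging the nesting factor: it belongs to group $E$, not group $A$, involves no non-algebraic wrapper at all, hence leaves your dominant ``nesting'' component unchanged while strictly increasing any additive count of algebraic operators --- under your lexicographic split (group $A$ dominant, groups $E,F$ subordinate) neither component decreases. The paper's fix is to make scalar multiplication itself multiplicative in the dominant measure, ${\it np}(\alpha\cdot s)=2\,{\it np}(s)$, so that this rule strictly decreases ${\it np}$; the lexicographic split is then not by rule group but by rule kind: all $+$- and $\zeroterm$-manipulating rules in \emph{all three} groups decrease ${\it np}$, and the remaining ${\it np}$-neutral scalar-displacement rules, such as $(\alpha\cdot s)r\to\alpha\cdot(sr)$ and $\lambda x.(\alpha\cdot s)\to\alpha\cdot\lambda x.s$, are handled by the secondary measure ${\it cx}$, with ${\it cx}(\alpha\cdot s)=1+{\it cx}(s)$. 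A final quibble: once the measure lives in $\mathbb{N}^2$ with the lexicographic order, strict decrease gives finiteness of every rewrite sequence but not, by itself, the uniform bound $n_s$ claimed in the statement (from $(1,0)$ one can descend to $(0,N)$ for any $N$); one additionally needs finite branching up to $\eqac$, or an induction on $({\it np}(s),{\it cx}(s))$ as in the paper, so your closing claim that ``$n_s$ may be taken to be $\mu(s)$ itself'' only stands for a single $\mathbb{N}$-valued measure, which your own design cannot supply.
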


\begin{proof}
  We define two measures on terms. First, the ``plus-number of $s$'',
  written ${\it np}(s)$, and defined by ${\it np}(\zeroterm)={\it
    np}(x)={\it np}(\puterm)=1$, ${\it np}(\lambda x.s)={\it
    np}(\rprojp{s})={\it np}(\lprojp{s})={\it np}(\cocanon{s})={\it
    np}(\alpha\cdot s)=2\,{\it np}(s)$, ${\it np}(st)={\it
    np}(\prodterm{s,t})=2\,{\it np}(s)\,{\it np}(t)$, and ${\it
    np}(s+t)=1+{\it np}(s)+{\it np}(t)$. Then, the ``scalar-complexity
  of $s$'', written ${\it cx}(s)$, and defined by ${\it
    cx}(\zeroterm)={\it cx}(x)={\it cx}(\puterm)=1$, ${\it cx}(\lambda
  x.s)={\it cx}(\rprojp{s})={\it cx}(\lprojp{s})={\it
    cx}(\cocanon{s})=2\,{\it cx}(s)$, ${\it cx}(st)={\it
    cx}(\prodterm{s,t})={\it cx}(s+t)=2\,{\it np}(s)\,{\it np}(t)$,
  and ${\it cx}(\alpha\cdot s)=1+{\it cx}(s)$.  The lemma is proved by
  induction on $({\it np}(s),{\it cx}(s))$ with the
  lexicographic order.
\end{proof}

\begin{theorem}[Normalization]
  \label{the:stnorm}
  Let ${}\entail s:A$ be a valid typing judgment. There exists an index
  $n_s$ such that any rewrite sequence $(s_i)_i$ with $s_0=s$
  is finite and of at most $n_s$ elements.
\end{theorem}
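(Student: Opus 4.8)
The plan is to exploit the split of $L$ into its algebraic part $E\cup F\cup A$ and its computational part $B$, as announced before the statement. Lemma~\ref{lem:efanorm} already provides the first half: $E\cup F\cup A$ terminates on every term, with a uniform bound $n^{a}_{s}$ on the length of any of its rewrite sequences. It remains to handle $B$ (call-by-value $\beta$, the two projection rules, and the closure rule $\cocanon{\canon{s}}\to s$) and to show that the two kinds of step cannot be interleaved into an infinite sequence. The typing hypothesis is used only for $B$: the algebraic rules terminate unconditionally, whereas $B$ is, up to its algebraic decoration, ordinary simply-typed call-by-value reduction, which normalizes only on typed terms.

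First I would establish strong normalization of $\to_{B}$ on typed terms by the reducibility method for the simply-typed $\lambda$-calculus, adapted to the present setting: one defines, for each type, a set of reducible computations that is moreover closed under the linear operations $+$, $\alpha\cdot(-)$, $\zeroterm$ (so that reducibility is compatible with $\eqac$ and with Groups $E,F$), interpreting the arrow, product, unit $\putype$ and closure type $\monad{A}$ in the usual way; a standard induction on typing derivations, using Lemma~\ref{lem:stsubst}, then shows every typed term reducible and hence $\to_{B}$-strongly normalizing. This yields a finite bound $\rho_{B}(s)$ on the number of $B$-steps issuing from $s$.

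To combine the two, I would use a forgetful reading $\overline{(\cdot)}$ that collapses the linear structure (forgetting scalars, reading $+$ as a formal collection and $\zeroterm$ as the empty one) and keeps the underlying computational skeleton, under which every rule of $E\cup F\cup A$ is \emph{transparent}: it leaves $\overline{(\cdot)}$ unchanged or merely discards material, so an $E\cup F\cup A$-step never increases the available $B$-activity, whereas each $B$-step fires a genuine redex of the skeleton. Consequently an infinite $L$-reduction from a typed $s$ would contain either infinitely many $B$-steps, contradicting the bound $\rho_{B}$ transported through $\overline{(\cdot)}$, or only finitely many, leaving an infinite $E\cup F\cup A$-tail, contradicting Lemma~\ref{lem:efanorm}; so every such reduction is finite. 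The lexicographic measure $\bigl(\rho_{B}(s),\,n^{a}_{s}\bigr)$, which strictly decreases at every $L$-step, then provides the uniform bound $n_{s}$; alternatively, since $\to$ is finitely branching up to $\eqac$, König's lemma applied to the (finitely branching, well-founded) reduction tree of $s$ gives $n_{s}$ as its height.

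The hard part will be the closure construct $\canon{-}$, and its difficulty is not incidental: duplicating a self-applicable closed value hidden inside a closure and then re-activating the copies is precisely the mechanism that produced the divergent $Y_{b}$ of Section~\ref{sec:broken}. Because call-by-value substitution copies a closure wholesale, a single boxed computation can be duplicated and its copies later re-opened by $\cocanon{\canon{-}}\to-$; one must therefore check both that re-opening duplicated closures keeps the $E\cup F\cup A$-part transparent and that it does not inflate the $B$-measure. This is exactly the point where the typing of closures by $\monad{A}$ and the strong normalization of the typed skeleton are genuinely needed, and where a naive count of $B$-redexes — which such duplication would make grow — must be replaced by the reducibility argument.
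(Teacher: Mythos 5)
There is a genuine gap, and it sits exactly where your proof does its real work: the ``transparency'' claim used to combine $SN(B)$ with Lemma~\ref{lem:efanorm}. The assertion that every $E\cup F\cup A$-step leaves the computational skeleton unchanged or merely discards material, and hence ``never increases the available $B$-activity,'' is false for any naive reading and is never proved. Concretely, the $A$-rule $\lambda x.(s+t)\to\lambda x.s+\lambda x.t$, applied under an application context, turns the single $\beta$-redex $(\lambda x.(s+t))v$ into the two $\beta$-redexes of $(\lambda x.s)v+(\lambda x.t)v$; likewise $(s+t)r\to sr+tr$ duplicates $r$ together with every $B$-redex it contains. So any direct count $\rho_B$ of $B$-steps available from a term can strictly increase along an $A$-step, and your lexicographic measure $(\rho_B(s),n^a_s)$ does not decrease at every $L$-step. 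Repairing this by computing $\rho_B$ on a fully distributed skeleton runs into the closure construct: distribution is deliberately blocked under $\canon{-}$ (there is no rule $\canon{s+t}\to\canon{s}+\canon{t}$), boxes are copied wholesale by call-by-value substitution, and a sum hidden in a box only becomes distributable after a later $B$-step $\cocanon{\canon{s}}\to s$ opens a copy --- so the skeleton itself changes under $B$-steps. Making this simulation precise amounts to a postponement/standardization argument for $E\cup F\cup A$ past $B$, which the proposal does not supply; and it cannot be waved away, since termination of $B$ alone plus termination of $E\cup F\cup A$ alone does not in general imply termination of their union when the two relations create redexes for each other, which is precisely the situation here.

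The paper sidesteps the interleaving problem entirely: it runs a single reducibility-candidates induction \`a la Girard over the whole relation $L$, with candidates consisting of strongly $L$-normalizing terms closed under the algebraic operations, and Lemma~\ref{lem:efanorm} is invoked \emph{inside} that induction, to justify the cases where addition and scalar multiplication occur (e.g.\ that algebraic combinations of reducible terms are again strongly normalizing). Your second paragraph is in fact very close to this --- you already require the candidates to be closed under $+$, $\alpha\cdot(-)$ and $\zeroterm$ --- but you then restrict the reducibility argument to $\to_B$ and outsource the algebraic rules to the flawed combination step. If you instead carry out that same reducibility argument for the full relation $L$, using Lemma~\ref{lem:efanorm} exactly where closure under the algebraic operations must be verified, the third paragraph becomes unnecessary; the uniform bound $n_s$ then follows from strong normalization by K\"onig's lemma as you correctly observe, since the reduction is finitely branching up to $\eqac$.
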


\begin{proof}
  The proof uses reducibility candidates, and follows the proof
  provided in {\cite{girard89proofs}}. Lemma~\ref{lem:efanorm} is
  used to handle the cases where addition and
  scalar multiplication are involved.
\end{proof}

\begin{theorem}[Confluence]\label{the:conf}
  Suppose that $s$ is typable. If $s\to^*t$ and $s\to^*t'$,
  there exists a term $r$ such that $t\to^*r$
  and $t'\to^*r$.
\end{theorem}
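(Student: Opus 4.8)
The plan is to combine the Normalization theorem (Theorem~\ref{the:stnorm}) with a local-confluence analysis via Newman's Lemma. Since every typable term is strongly normalizing, it suffices to establish \emph{local confluence} on typable terms: whenever $s\to t$ and $s\to t'$ in one step, there is a common reduct $r$ with $t\to^*r$ and $t'\to^*r$. Newman's Lemma then upgrades local confluence to full confluence, which is exactly the statement. One should first note that, because $\to$ is defined to be congruent and consistent with $\eqac$, the relation already absorbs associativity and commutativity of $+$ on both sides of a step, so the whole analysis may be carried out on $\eqac$-equivalence classes; strong normalization is unaffected since the measures ${\it np}$ and ${\it cx}$ of Lemma~\ref{lem:efanorm} are invariant under $\eqac$.

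For local confluence I would proceed by the usual critical-pair method, inspecting all ways two redexes in $s$ can overlap. Two redexes occurring at disjoint positions commute trivially by the congruence of $\to$, so the work is concentrated at overlapping or nested redexes. The group-$B$ rules (beta, the two projection rules, and $\cocanon{\canon{s}}\to s$) have pairwise distinct head constructors and left-hand sides that are linear in their metavariables, so they create no genuine critical pairs among themselves; their only interaction with the other groups arises when a group-$A$, $E$, or $F$ rule rewrites inside one of their arguments, and such cases close immediately because the property of $v$ being a value is preserved by those rewrites and the results can be rejoined by performing the delayed $B$-step afterwards. The structural rules of group $A$ (distribution of application, abstraction, projection and $\cocanon{-}$ over $+$, $\alpha\cdot$, and $\zeroterm$) likewise reduce to checking that pushing a linear context through a sum, computed in either order, yields $\eqac$-equal terms.

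The main obstacle, as expected, is the interaction of the scalar-collecting rules of group $F$ with associativity and commutativity of $+$, together with their overlap with the distributivity rules of group $E$. Because we rewrite modulo $\eqac$, a term such as $\alpha\cdot s+\beta\cdot s+\gamma\cdot s$ exhibits several overlapping $F$-redexes that must be shown to converge; here the join depends on the \emph{associativity and commutativity of the ring addition} in $\alg{A}$, so that $((\alpha+\beta)+\gamma)$ and the alternative groupings agree. Similarly, the overlap between $\alpha\cdot(s+t)\to\alpha\cdot s+\alpha\cdot t$ of group $E$ and a subsequent $F$-collection, or between $\alpha\cdot(\beta\cdot s)\to(\alpha\beta)\cdot s$ and the $\zeroterm$ and unit rules, closes precisely by the \emph{distributive} and \emph{unit} laws of the ring. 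Each such AC-critical pair is finite in number and joins using exactly one ring identity, so once these are dispatched local confluence holds, and confluence follows by Newman's Lemma.
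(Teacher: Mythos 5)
Your proposal is correct and follows essentially the same route as the paper: the paper's proof likewise first establishes local confluence (if $s\to t$ and $s\to t'$ then $t$ and $t'$ have a common reduct) and then concludes by strong normalization, via an induction on the length of the longest reduction sequence, which is exactly an inline proof of Newman's Lemma. The only cosmetic differences are that you invoke Newman's Lemma by name and restrict local confluence to typable terms (harmless, by subject reduction), and that you spell out the AC-critical-pair analysis that the paper leaves implicit.
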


\begin{proof}
  We first prove that for all terms $s$, if $s\to t$ and $s\to t'$
  then there exists a term $r$ such that $t\to^*r$ and $t'\to^*r$. We
  then prove the theorem using strong normalization, by induction on
  the length of the longest sequence of reductions.
\end{proof}

\subsubsection{Example: simulating quantum computation}

As an example of the expressiveness of the language, we follow the
motivation of \cite{arrighi08linear} and show that we
can simulate quantum computation using the computational algebraic
lambda-calculus.

Quantum computation is a paradigm where data is encoded on the state
of objects governed by the law of quantum physics. The mathematical
description of a quantum boolean is a (normalized) vector in a
$2$-dimensional Hilbert space $\mathbb{H}$. In order to give sense to
this vector, we choose an orthonormal basis $\s{\ket0,\ket1}$. A
vector $\alpha\ket0+\beta\ket1$ is understood as the ``quantum
superposition'' of the boolean $0$ and the boolean $1$. 

\def\qbit{\it qbool}
For simulating quantum computation, we therefore choose the ring
$\alg{A}$ to be the field of complex numbers. Given an arbitrary type
$X$, we can represent a quantum boolean in the computational
algebraic lambda-calculus as a closed value of type 
${\it qbool}=MX\to(MX\to
MX)$. We encode $\alpha\ket0+\beta\ket1$ as $\lambda
xy.\canon{\alpha\cdot\cocanon{x}+\beta\cdot\cocanon{y}}$. We write
$\trueterm$ for $\lambda xy.\canon{\cocanon{x}}$ and $\falseterm$ for
$\lambda xy.\canon{\cocanon{y}}$.

The operations we can perform on quantum booleans are of two sorts:
Quantum gates and measurements. In the mathematical description, the
former correspond to unitary maps. The Hadamard gate is such a
unitary, sending $\ket0$ to $\frac1{\sqrt2}(\ket0+\ket1)$ and $\ket1$
to $\frac1{\sqrt2}(\ket0-\ket1)$. It can be written as the term
\[
H =
\lambda x.
\lambda ab.
\canon{\s{x\canon{\begin{array}{@{}c@{}}
        \frac1{\sqrt2}\end{array}{\cdot}(\s{a}+\s{b})}
    \canon{\begin{array}{@{}c@{}}\frac1{\sqrt2}
      \end{array}{\cdot}(\s{a}-\s{b})}}}
\]
of type $\qbit\to\qbit$. 
Applying the Hadamard gate to a quantum boolean
$b$ is computing the term $Hb$.

A measurement has a probabilistic outcome and does not have a
satisfactory description as function of $\mathbb{H}$.
It is customary to
represent quantum booleans with \define{density matrices}, that is,
positive matrices of norm one. The measurement operation becomes the
map sending a matrix to its diagonal.

In order to model measurements, we can use the fact that the language
features higher-order terms and we encode a positive matrix as a term of
type $\qbit\to\qbit$. The quantum boolean $\alpha\ket0+\beta\ket1$ is
encoded as the term $v$ equal to
\[
\lambda x.
\lambda ab.
\canon{\s{
  x\,\canon{
    \alpha\bar{\alpha}{\cdot}\s{a}+\alpha\bar{\beta}{\cdot}\s{b}
  }\canon{
    \bar{\alpha}\beta{\cdot}\s{a}+\beta\bar{\beta}{\cdot}\s{b}
  }}
}.
\]
The application of the Hadamard gate to $v$ is $H'v$, where $H'$ is
the term $H'=\lambda x.H(xH)$ of type
$(\qbit\to\qbit)\to(\qbit\to\qbit)$.
The measurement is also of type $(\qbit\to\qbit)\to(\qbit\to\qbit)$ and
can be encoded as the term $P$ equal to
$
\lambda v.
\lambda x.
\lambda ab.
\canon{\s{
  (v\,x)\canon{\cocanon{a}}\canon{\zeroterm}
  + (v\,x)\canon{\zeroterm}\canon{\cocanon{b}}
}}.
$
We can check that $Pv$ is indeed equal to
$
\lambda x.
\lambda ab.
\canon{\s{
  x\,
  \canon{\alpha\bar{\alpha}{\cdot}\s{a}}
  \canon{\beta\bar{\beta}{\cdot}\s{b}
  }}
}.
$

\subsection{Equational theory}
\label{sec:steqth}

\begin{table*}[tbh]
\begin{center}
\begin{tabular}{c}
\hline
\hline
\mbox{\begin{minipage}{4.8in}
\mbox{%
\renewcommand{\arraystretch}{1.3}%
$\begin{array}{@{}c@{}}
\begin{array}{@{}r@{{~}\eqax{~}}l@{\quad}r@{{~}\eqax{~}}l@{}}
\hspace{-5ex}(*)\quad0\cdot s & \zeroterm
&
s+\zeroterm & s
\\
1\cdot s & s
&
\alpha\cdot s + \alpha\cdot t & \alpha\cdot(s+t)
\\
\alpha\cdot s+\beta\cdot s & (\alpha+\beta)\cdot s
&
(r+s)+t & r+(s+t)
\\
\alpha\cdot(\beta\cdot s) & (\alpha\beta)\cdot s
&
s+t & t+s
\\[2ex]
\prodterm{r+\alpha\cdot s,t}&\prodterm{r,t}+\alpha\cdot\prodterm{s,t}
&
\lprojp{s+\alpha\cdot t}&\lprojp{s}+\alpha\cdot\lprojp{t}
\\
\prodterm{r,s+\alpha\cdot t}&\prodterm{r,s}+\alpha\cdot\prodterm{r,t}
&
\rprojp{s+\alpha\cdot t}&\rprojp{s}+\alpha\cdot\rprojp{t}
\\
\prodterm{\zeroterm,t}&\zeroterm
&
\lprojp{\zeroterm}&\zeroterm
\\
\prodterm{t,\zeroterm}&\zeroterm
&
\rprojp{\zeroterm}&\zeroterm
\\
(r+\alpha\cdot s)t&rt+\alpha\cdot(st)
&
\zeroterm t&\zeroterm
\\
r(s+\alpha\cdot t)&rs+\alpha\cdot(rt)
&
t\zeroterm&\zeroterm
\\
\lambda x.(s+\alpha\cdot t)&\lambda x.s+\alpha\cdot(\lambda x.t)
&
\lambda x.\zeroterm&\zeroterm
\\
\cocanon{s+\alpha\cdot t}&\cocanon{s}+\alpha\cdot\cocanon{t}
&
\cocanon{\zeroterm}&\zeroterm
\\[2ex]
\lproj{\prodterm{u,v}}&u 
&
\canon{\cocanon{u}}&u
\\
\rproj{\prodterm{u,v}}&v
&
\cocanon{\canon{s}}&s
\\
\prodterm{\lprojp{u},\rprojp{u}}&u
&
(\lambda x.\cocanon{s})t & \cocanon{(\lambda x.s)t}
\\
(\lambda x.u)v & u[v/x]
&
((\lambda xy.r)s)t & ((\lambda yx.r)t)s
\\
\lambda x.(u x) & u
&
(\lambda x.r)((\lambda y.s)t)
& (\lambda y.(\lambda x.r)s)t
\\
(\lambda x.x)s&s
& u & \puterm
\end{array}
\\\\[-2ex]
\end{array}$}
\end{minipage}}
\\
\hline
\end{tabular}
\end{center}
\caption{Axiomatic equivalence relation.}
\label{tab:eqax}
\end{table*}

Together with its type system, the computational algebraic
lambda-calculus shares some strong similarities with Moggi's
computational lambda-calculus~\cite{moggi91notions} (although the
notations used for the monad term constructs are closer
to~\cite{filinski94representing}). We follow the same path for
defining a model for the algebraic lambda-calculus.

\begin{definition}\rm
  We define an equivalence relation $\eqax$ on terms as the smallest
  congruent equivalence relation consistent with $\eqac$, closed under
  $\alpha$-equivalence and the equations of Table~\ref{tab:eqax}. The
  relation is the symmetric closure of the reduction $L$ of
  Table~\ref{tab:betared}, together with the rules taking into account
  the new term constructs.
  
  Two valid typing judgments $\Delta\entail s,t:A$ are said to be
  \define{axiomatically equivalent}, written $\Delta\entail s\eqax
  t:A$, if $s\eqax t$ is provable.
\end{definition}

\begin{definition}\rm
  We define a $\alg{A}$-enriched computational category to be
 a cartesian closed category 
    $(\cat{C},\prodfunc,\internalhom,\punit)$,
 together with a strong monad $(M,\unit,\mult,\strength)$,
 such that the Kleisli category is enriched over
    the category of $\alg{A}$-modules.
  We refer the reader to the literature for the definitions
  (e.g. \cite{moggi91notions,kelly82basic,lambek89introduction}).
\end{definition}

\begin{example}\label{ex:set}
  The category of sets and functions together with the monad $M$
  sending a set $X$ to the free module generated by $X$ is a
  $\alg{A}$-enriched computational category.
\end{example}

\begin{definition}\rm
  We define the category $\linealcat$ as follows: objects are types
  and morphisms $A\to B$ are axiomatic equivalent classes of
  typing judgments $x:A\entail v:B$ (where $v$ is a value).
\end{definition}

\begin{theorem}
  \label{the:lincat}
  The category $\linealcat$ is a $\alg{A}$-enriched computational
  category. The cartesian closed structure is given by the classical
  subset of the language in the usual way (see e.g. 
  \cite{lambek89introduction}).
  The monad $M$ sends $A$ to $MA$ and $x:A\entail u:B$ to $y:MA\entail
  \canon{(\lambda x.u)\cocanon{y}}:MB$, and the three required
  morphisms are
  $\unit_A= x:A\entail\canon{x}:MA$,
  $\mult_A=
    x:MMA\entail\canon{\cocanon{\cocanon{x}}}:MA$,
    $\strength_{A,B}=
    x:MA\times B\entail \canon{\prodterm{\s{\lprojp{x}},\rprojp{x}}}
    :M(A\times B)$.
  The enrichment of $\linealcat(A,MB)$ is given by the module structure
  of the term algebra. Consider the two maps $f=(x:A\entail u:MB)$ and
  $g=(x:A\entail v:MB)$. We define $0 = (x:A\entail 
  \canon{\zeroterm}:MB)$,
$f+g=(x:A\entail \canon{\cocanon{u}+\cocanon{v}}:MB)$,
$\alpha\cdot f=(x:A\entail \canon{\alpha\cdot \cocanon{u}}:MB)$.
  \qed
\end{theorem}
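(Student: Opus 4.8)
The plan is to verify, in the order they are listed in the definition, the three layers of structure: that $\linealcat$ is a category, that it carries a cartesian closed structure together with a strong monad $\monad$, and that the induced Kleisli category is enriched over $\alg{A}$-modules. The one tool used throughout is that $\eqax$ is a congruent equivalence relation whose axioms are closed under substitution, so that every construction descends to equivalence classes and every required law becomes a finite chain of equations drawn from Table~\ref{tab:eqax}.

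First I would check that $\linealcat$ is a well-defined category. Composition of $x:A\entail v:B$ with $y:B\entail w:C$ is the class of $x:A\entail w[y\leftarrow v]:C$; this is again a value, since substituting a value into a value yields a value, is well typed by Lemma~\ref{lem:stsubst}, and respects $\eqax$ by congruence. The identity and associativity laws are then immediate from the substitution identities $y[y\leftarrow v]=v$, $v[x\leftarrow x]=v$ and the associativity of substitution, requiring no equation from the table. The cartesian closed structure is the expected one: $\putype$ is terminal because $u\eqax\puterm$ forces uniqueness; $A\prodtype B$ is a product, its universal property supplied by $\lproj{\prodterm{u,v}}\eqax u$, $\rproj{\prodterm{u,v}}\eqax v$ and $\prodterm{\lprojp{u},\rprojp{u}}\eqax u$; and $A\duploli B$ is the exponential, with the adjunction given by the $\beta$- and $\eta$-rules $(\lambda x.u)v\eqax u[v/x]$ and $\lambda x.(ux)\eqax u$.

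Next I would verify that $(\monad,\unit,\mult,\strength)$ is a strong monad. Functoriality of $\monad$, naturality of $\unit$ and $\mult$, the two unit laws and associativity all unfold into equalities of terms built from $\canon{-}$ and $\cocanon{-}$ that are discharged by $\cocanon{\canon{s}}\eqax s$ and $\canon{\cocanon{u}}\eqax u$ together with congruence; for instance the unit law $\mult_A\circ\unit_{\monad A}$ rewrites $\canon{\cocanon{\cocanon{\canon{x}}}}$ to $x$ in two such steps. The strength axioms are handled the same way, tracing each coherence diagram through the definition of $\strength_{A,B}$ and the product equations. As this layer is essentially Moggi's computational $\lambda$-calculus, I expect only bookkeeping and no genuine difficulty here.

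The substantive step, which I expect to be the main obstacle, is the enrichment of the Kleisli category over $\alg{A}$-modules. That each hom-object is an $\alg{A}$-module follows by transporting the module equations of Table~\ref{tab:eqax} --- commutativity and associativity of $+$, the laws $s+\zeroterm\eqax s$, $1\cdot s\eqax s$, $\alpha\cdot(\beta\cdot s)\eqax(\alpha\beta)\cdot s$, and the distributivities --- across the correspondence $u\mapsto\cocanon{u}$, $s\mapsto\canon{s}$ that the pair $\canon{\cocanon{u}}\eqax u$, $\cocanon{\canon{s}}\eqax s$ turns into a bijection; under it $0$, $f+g$ and $\alpha\cdot f$ inherit the module axioms from the term algebra. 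The delicate point is bilinearity of Kleisli composition. Linearity in the first-applied arrow is direct: the varying term sits in the argument position of an application, so the group~$A$ rules $\cocanon{s+\alpha\cdot t}\eqax\cocanon{s}+\alpha\cdot\cocanon{t}$ and $r(s+\alpha\cdot t)\eqax rs+\alpha\cdot(rt)$ push the sum and scalar straight through. Linearity in the second-applied arrow is harder, since that arrow's body is buried under both a $\lambda$ and a $\canon{-}$, and $\canon{-}$ is deliberately non-linear; here I would first apply the commuting conversion $(\lambda x.\cocanon{s})t\eqax\cocanon{(\lambda x.s)t}$ to move the $\cocanon{-}$ inward so that a $\cocanon{\canon{-}}$ cancels, and only then invoke $\lambda x.(s+\alpha\cdot t)\eqax\lambda x.s+\alpha\cdot(\lambda x.t)$ and $(r+\alpha\cdot s)t\eqax rt+\alpha\cdot(st)$ before converting back. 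Checking that these manipulations are coherent --- that the two directions agree and that the commuting conversions of Table~\ref{tab:eqax} suffice to realise every instance --- is where the real work of the theorem lies.
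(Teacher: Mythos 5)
Your proposal is correct and matches the paper's intent exactly: the paper states Theorem~\ref{the:lincat} with the construction inline and ends it with \qed, omitting the verification as routine, and what you supply is precisely that omitted verification, carried out with the equations of Table~\ref{tab:eqax}. In particular you correctly identify the one genuinely non-obvious point --- that linearity of Kleisli composition in the arrow under the $\lambda$ and $\canon{-}$, like functoriality of $\monad$ itself, hinges on the commuting conversions $(\lambda x.\cocanon{s})t\eqax\cocanon{(\lambda x.s)t}$ and $(\lambda x.r)((\lambda y.s)t)\eqax(\lambda y.(\lambda x.r)s)t$ --- which is exactly why those rules appear in the table.
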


\begin{definition}
  \label{def:linealdenot}\rm
  Consider a $\alg{A}$-enriched computational category $\cat{C}$. We
  define the interpretation of a computation 
  $\denot{\Delta\entail t:B}^c$ as a morphism in
  $\kcat{C}{M}$ and the interpretation of a value
  $\denot{\Delta\entail v:B}^v$ as a morphism in
  $\cat{C}$. They are defined inductively, together with their obvious
  meanings.
\end{definition}

\begin{theorem}
  \label{the:linealinternal}
  If we interpret the computational algebraic lambda-calculus in
  $\linealcat$ then the  equations  $\denot{x:A\entail
    v:B}^v\eqax(x:A\entail v:B)$ and $\denot{x:A\entail
    t:B}^c\eqax(x:A\entail\canon{t}:\monad{B})$ hold.\qed
\end{theorem}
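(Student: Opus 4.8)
The statement to prove is Theorem~\ref{the:linealinternal}, which asserts that the denotational interpretation in the syntactic category $\linealcat$ agrees with the syntax itself: a value $v$ is axiomatically equivalent to its interpretation $\denot{x:A\entail v:B}^v$, and a computation $t$ is axiomatically equivalent to $\canon{t}$ precisely when one composes the computation interpretation with the canonical injection into values.

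\paragraph{Approach.}
The plan is to proceed by simultaneous structural induction on the typing derivations of $v$ and $t$, exploiting the fact that $\linealcat$ is the \emph{initial} or syntactic model: its objects are types and its morphisms are $\eqax$-classes of value judgments $x:A\entail v:B$. Because the denotation function $\denot{-}^v$ (resp.\ $\denot{-}^c$) of Definition~\ref{def:linealdenot} is defined inductively by sending each term construct to the corresponding categorical operation, and because Theorem~\ref{the:lincat} exhibits each such categorical operation in $\linealcat$ as a specific syntactic value built from the same construct, the two sides of each equation will coincide up to exactly the equations of Table~\ref{tab:eqax}. First I would fix the two mutually recursive claims and set up the induction so that the value-equation $\denot{x:A\entail v:B}^v\eqax v$ and the computation-equation $\denot{x:A\entail t:B}^c\eqax\canon{t}$ can each invoke the other on immediate subterms.

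\paragraph{Key steps.}
For the base and structural cases I would unfold $\denot{-}$ on each term former and compare against the corresponding morphism given explicitly in Theorem~\ref{the:lincat}. For the monadic constructs the unit, multiplication and strength are realized syntactically as $\canon{x}$, $\canon{\cocanon{\cocanon{x}}}$ and $\canon{\prodterm{\s{\lprojp{x}},\rprojp{x}}}$, so the coincidence of $\denot{\canon{s}}$ and $\denot{\cocanon{s}}$ with the syntax reduces, by the induction hypothesis, to the monad equations $\canon{\cocanon{u}}\eqax u$ and $\cocanon{\canon{s}}\eqax s$ of Table~\ref{tab:eqax}. For the cartesian-closed fragment — abstraction, application, pairing and projections — the interpretation factors through the classical subset of the language, and the required identities (the $\beta$ and $\eta$ laws $(\lambda x.u)v\eqax u[v/x]$, $\lambda x.(ux)\eqax u$, $\lproj{\prodterm{u,v}}\eqax u$, $\prodterm{\lprojp{u},\rprojp{u}}\eqax u$, and the unit law $u\eqax\puterm$ at type $\putype$) are all listed in the table. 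For the algebraic constructs $+$, $\alpha\cdot(-)$ and $\zeroterm$, the enrichment of $\linealcat(A,MB)$ from Theorem~\ref{the:lincat} defines these componentwise through $\cocanon{-}$ and $\canon{-}$, so the computation-equation for a sum or scalar multiple follows from the linearity equations such as $\cocanon{s+\alpha\cdot t}\eqax\cocanon{s}+\alpha\cdot\cocanon{t}$ together with $\canon{\cocanon{u}}\eqax u$.

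\paragraph{Main obstacle.}
The delicate part will be the interface between the two interpretation functions, that is, faithfully tracking where a computation sub-judgment is coerced to a value (via $\canon{-}$) and where a value is viewed as a trivial computation (via $\unit$, i.e.\ $\canon{-}$). Concretely, in the application and abstraction cases the Kleisli composition in $\kcat{C}{M}$ unfolds into nested $\canon{-}$/$\cocanon{-}$ wrappers, and one must use the commuting conversions of Table~\ref{tab:eqax} — notably $(\lambda x.\cocanon{s})t\eqax\cocanon{(\lambda x.s)t}$ and the two let-style reassociations $((\lambda xy.r)s)t\eqax((\lambda yx.r)t)s$ and $(\lambda x.r)((\lambda y.s)t)\eqax(\lambda y.(\lambda x.r)s)t$ — to push these wrappers past binders and into the positions demanded by the definition of $\denot{-}$. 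I would organize this bookkeeping by proving the computation-equation in the coerced form $\denot{t}^c\eqax\canon{t}$ exactly so that every appeal to the induction hypothesis returns a value, keeping the $\canon{-}$-layer uniform; once that invariant is maintained, each inductive step reduces to a finite reshuffling justified entirely by Table~\ref{tab:eqax}, and no genuinely new equation is needed.
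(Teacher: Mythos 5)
Your proposal is correct and supplies precisely the argument the paper leaves implicit: the theorem is stated with its proof omitted (note the \qed{} placed in the statement itself), tacitly appealing to the standard Moggi-style internal-language argument, namely a mutual structural induction on typing derivations maintaining the invariant $\denot{x:A\entail t:B}^c\eqax(x:A\entail\canon{t}:\monad{B})$, with each case discharged by the explicit syntactic morphisms of Theorem~\ref{the:lincat} and the equations of Table~\ref{tab:eqax}. Your identification of the commuting conversions $(\lambda x.\cocanon{s})t\eqax\cocanon{(\lambda x.s)t}$ and the two let-style reassociation laws as the crux of the application case, where Kleisli composition must be unfolded past binders, is exactly the bookkeeping the omitted proof has to perform, so there is nothing to flag.
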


\subsection{Relation with other algebraic lambda-calculi}

In this section, we relate the computational algebraic
lambda-calculus we
described in the previous section and the algebraic lambda-calculus
$\lambda_{\it alg}$ of Vaux \cite{vaux08algebraic} and lineal, the
algebraic lambda-calculus $\lambda_{\it lin}$ of Arrighi, Dowek and
D\`{i}az-Caro \cite{arrighi08linear,sts09}. Both languages can be
written using the term grammar
$s,t::= x\bor\lambda x.s\bor st\bor s+t\bor
\zeroterm\bor \alpha\cdot s$.
A possible simple type system is
$A,B::= \iota\bor A\duploli B,
$
where $\iota$ is a base type. The typing rules are the usual ones for
the application and the lambda-abstraction. For the sum, the zero and
the scalar multiplication, we use the typing rules found in
Table~\ref{tab:typrules}.

The main difference between the two languages is the reduction system.

\paragraph{Vaux's lambda-calculus.}
In $\lambda_{\it alg}$, the lambda-abstraction is linear: $\lambda
x.(s+t)\to \lambda x.s + \lambda x.t$, the application is linear on
the left and non-linear on the right: $(r+s)t\to rt+st$ but
$r(s+t)\not\to rs + rt$. However, $(\lambda x.s)t\to s[t/x]$ for any
term $t$.

\def\denota#1{{(\!|#1|\!)_{\it alg}}}
This language is call-by-name: a
function is fed with a computation (that is, a term in
superposition). One can encode $\lambda_{\it alg}$ in the
computational algebraic lambda-calculus as follows: $\denota{x} =
\cocanon{x}$, $\denota{\lambda x.s} = \lambda x.\denota{s}$,
$\denota{st} = \denota{s}\canon{\denota{t}}$. Types are encoded as
follows: $\denota{\iota} = \iota$, $\denota{A\duploli
  B}=M\denota{A}\duploli\denota{B}$. 

If $x:A\entail s:B$ is a valid typing judgment in $\lambda_{\it
  alg}$, $x:MA\entail \denota{s}:\denota{B}$ is valid in the
computational algebraic lambda-calculus. In particular, if
$\mathcal{C}$ is a $\mathcal{A}$-enriched computational model, $s$
described a map $M\denot{A}\to M\denot{B}$ in the category $\mathcal{C}$.

\paragraph{Lineal.} In $\lambda_{\it lin}$, the lambda-abstraction is
non-linear: $\lambda x.(s+t) \not\to \lambda x.s + \lambda x.t$. In
this calculus, the application is bilinear. In particular,
 $(\lambda x.s)u\to s[u/x]$ only if $u$ is a~value.

\def\denotl#1{{(\!|#1|\!)_{\it lin}}}
This calculus is call-by-value: the argument of a function is first
reduced to a value before being substituted in the body of the 
function.
One can encode $\lambda_{\it lin}$ in the
computational algebraic lambda-calculus as follows:
$\denotl{x} = x$,
$\denotl{\lambda x.s} = \lambda x.\canon{\denotl{s}}$,
$\denotl{st} = \cocanon{\denotl{s}\denotl{t}}$.
Types are encoded as
follows: $\denotl{\iota} = \iota$,
$\denotl{A\duploli B}=\denotl{A}\duploli M\denotl{B}$.

If $x:A\entail s:B$ is a valid typing judgment in $\lambda_{\it
  lin}$, $x:A\entail \denotl{s}:\denotl{B}$ is valid in the
computational algebraic lambda-calculus. In particular, if
$\mathcal{C}$ is a $\mathcal{A}$-enriched computational model, $s$
describes a morphism $\denot{A}\to M\denot{B}$ of $\mathcal{C}$.

\section{Adding controlled divergence}
\label{sec:addingdiv}

Because of Theorem~\ref{the:stnorm}, the term $Y_b$ of
Equation~\eqref{eq:yb} is not constructable in the computational
algebraic lambda-calculus. In this section, we add to the language a
notion of fixpoint in order to understand what goes wrong in the
untyped system.

\subsection{A fixpoint operator}
\label{sec:fixpoint}

In order to stay typed and to be able to keep most of the
computational interpretation of Section~\ref{sec:steqth} but still
to be able to have a term $Y_b$, we add to the language a unary
term operator $Y$ satisfying the reduction
$
Y(v)\to \cocanon{v\canon{Y(v)}}
$, 
linear with respect to the module structure
and satisfying the typing rule
\begin{equation}
  \label{eq:typrulyb}
  \Delta\entail s:MA\duploli MA
  \quad
  \Longrightarrow
  \quad
  \Delta\entail Y(s):A.
\end{equation}
We can now build a term $Y_b$ behaving as
required in Equation~\eqref{eq:yb}:
\begin{equation}
  \label{eq:styb}
  Y_b \equiv Y(\lambda x.\canon{b+\cocanon{x}}).
\end{equation}
Indeed, $Y(\lambda x.\canon{b+\cocanon{x}})$ reduces to the term
$\cocanon{(\lambda x.\canon{b+\cocanon{x}})
  \canon{Y(\lambda x.\canon{b+\cocanon{x}})}}$, which reduces to
$\cocanon{\canon{b+\cocanon{\canon{
	Y(\lambda x.\canon{b+\cocanon{x}})}}}}$,
itself reducing to $b$ $+$ 
$Y(\lambda x.\canon{b+\cocanon{x}})$.
Provided that $\Delta\entail b:B$, the typing
judgment $\Delta\entail Y_b:B$ is valid.
Of course, if we keep the operational semantics of
Section~\ref{sec:st}, the system becomes as inconsistent as with the
untyped calculus.

\subsection{The zero in the algebra of terms}
\label{sec:zeroterm}

To understand what goes wrong, consider the typing judgment
$
x:MA\entail x-x:MA.
$
With the equational system of Section~\ref{sec:steqth}, this typing
judgment is equivalent to $x:MA\entail \zeroterm:MA$. We claim that
this interpretation is correct as long as the term $x$ ``does not
contain any potential infinity''. With the additional construct
$Y$, we can replace $x$ with $\canon{Y_a}$ (where $Y_a$ is
constructed as in Equation~\eqref{eq:styb}) for some term $a$ of type
$A$. Consider the two terms
\begin{gather*}
  (\lambda y.\puterm)((\lambda x.(x-x))\canon{Y_a}),~~\neweqnber{eq:yb1}
  \qquad
  (\lambda y.\cocanon{y})
  ((\lambda x.(x-x))\canon{Y_a}).~~\neweqnber{eq:yb2}
\end{gather*}
Term~\eqref{eq:yb1} reduces to $(\lambda
y.\puterm)(0\cdot\canon{Y_a})$ and then to $0\cdot\puterm$.  It is
reasonable to think that this is equivalent to $\zeroterm$, thus
making $0\cdot\canon{Y_a}$ also equivalent to $\zeroterm$.
Term~\eqref{eq:yb2}, on the contrary, reduces to $Y_a-Y_a$, the flawed
term of Equation~\eqref{eq:broken}.

The problem does not show up when writing the equation
$\canon{Y_a}-\canon{Y_a}=0\cdot \canon{Y_a}$ but when one equates it
with $\zeroterm$. The term $0\cdot \canon{Y_a}$ is a ``weak zero''. It
makes a computation ``null'' as long as it does not diverge (and there
is always a diverging term of any inhabited type by using the
construction~\eqref{eq:styb}).  Therefore, despite the fact that
$\alg{A}$ is a ring, the set of terms of the form $\alpha\cdot s$ for
a fixed term $s$ is only a commutative monoid: addition does not admit
an inverse, it only has an identity element $0\cdot s$.
This is consistent with previous
studies~\cite{vaux08algebraic,selinger03order}.

\subsection{Recasting the equational theory}
\label{sec:recasteq}

With the addition of fixpoints, the equational theory given
in Section~\ref{sec:steqth} is not valid. In the discussion of the
previous section, we noted that the module of terms needs to be
weakened to a commutative monoid by removing the rule $0\cdot s\eqax
\zeroterm$. This is the only required modification,
and one can rewrite the whole theory without this rule.

In the following, we do not consider the language extended with the
fixpoint combinator; instead, we give a general theory for possible
divergence in the context of a simple type system.

\begin{definition}\rm
  A \define{weak $\alg{A}$-module} is a module over $\alg{A}$ where
  $\alg{A}$ is seen as a semiring. In particular, a weak
  $\alg{A}$-module is only a commutative monoid, and $v - v = 0\cdot v
  \neq 0$. Given a set $X$, the \define{free weak $\alg{A}$-module
    over $X$} is the structure consisting of all the finite sums
  $\sum_i \alpha_i\cdot x_i$, where $\alpha_i\in\alg{A}$ and $x_i\in
  X$.
\end{definition}

\begin{definition}\rm
  A \define{weak $\alg{A}$-enriched computational category}
  consists of a cartesian closed category
    $(\cat{C},\prodfunc,\internalhom,\punit)$,
   together with a strong monad $(M,\unit,\mult,\strength)$,
   such that the Kleisli category $\kcat{C}{M}$ is enriched over
    the category of weak $\alg{A}$-modules.
\end{definition}

\begin{remark}
  As we saw in Section~\ref{sec:zeroterm}, the two zero-functions
  $x:A\entail\zeroterm:A$ and $x:A\entail 0\cdot x:A$ behave
  differently in general. In a weak $\alg{A}$-enriched computational
  category, the former is interpreted as the unit element of the
  monoid $\kcat{C}{M}(A,A)$ whereas the latter is of the form
  $0\cdot{\it id}_A$, where ${\it id}_A$ is the identity map in
  $\kcat{C}{M}(A,B)$.
\end{remark}

\begin{lemma}
  Any $\alg{A}$-enriched computational category is also a weak
  $\alg{A}$-enriched computational category.
\end{lemma}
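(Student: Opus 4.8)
The plan is to reduce the statement to a single observation: a ring is in particular a semiring, so every module over the ring $\alg{A}$ is, after forgetting additive inverses, a weak $\alg{A}$-module. Since the two notions of computational category share exactly the same data for the cartesian closed structure $(\cat{C},\prodfunc,\internalhom,\punit)$ and the same strong monad $(M,\unit,\mult,\strength)$, the only thing that must change is the base over which the Kleisli category $\kcat{C}{M}$ is enriched. Thus it suffices to show that an enrichment over the category of $\alg{A}$-modules induces an enrichment over the category of weak $\alg{A}$-modules.

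First I would exhibit the forgetful functor $U$ from $\alg{A}$-modules (with $\alg{A}$ regarded as a ring) to weak $\alg{A}$-modules (with $\alg{A}$ regarded as the underlying semiring). On objects, an $\alg{A}$-module is an abelian group together with a ring action; dropping the requirement that addition be invertible leaves precisely a commutative monoid with a semiring action, i.e. a weak $\alg{A}$-module, and every $\alg{A}$-linear map is a fortiori weak $\alg{A}$-linear. Hence $U$ is well defined and faithful.

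Next I would check that $U$ is lax symmetric monoidal for the respective tensor products. It sends the monoidal unit $\alg{A}$ to $\alg{A}$ (an isomorphism of units), and the universal property of the semiring tensor product supplies a canonical comparison $U(M)\otimes U(N)\to U(M\otimes N)$, natural and coherent, from the fact that the bilinear pairing $M\times N\to M\otimes N$ remains bilinear after forgetting inverses. With $U$ lax monoidal, the standard change-of-base theorem for enriched categories applies: post-composing every hom-object of $\kcat{C}{M}$ with $U$ turns an $\alg{A}$-module-enriched category into a weak $\alg{A}$-module-enriched one. Concretely, each hom-monoid $\kcat{C}{M}(A,B)$ keeps its addition, its zero, and its scalar action, and composition stays bilinear; no axiom of a weak module is violated, because weak modules require strictly fewer axioms than modules.

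The main (and essentially only) point requiring care is the verification that $U$ is lax monoidal, i.e. the compatibility of the two universal bilinear maps; this is routine and carries no genuine difficulty, since a weak $\alg{A}$-module is nothing but an $\alg{A}$-module with the inverse axiom deleted. Once $U$ is in place, the cartesian closed category and the strong monad are carried over verbatim, and the resulting structure is by definition a weak $\alg{A}$-enriched computational category, completing the argument.
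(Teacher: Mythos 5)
Your proposal is correct and rests on exactly the same observation as the paper's one-line proof, namely that every $\alg{A}$-module is a weak $\alg{A}$-module because a ring is in particular a semiring. The change-of-base machinery (the forgetful functor $U$ and its lax monoidal structure) that you spell out is precisely what the paper leaves implicit, so your write-up is a more detailed rendering of the same argument rather than a different route.
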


\begin{proof}
  Any $\alg{A}$-module is also a weak $\alg{A}$-module.
\end{proof}

\begin{remark}
  In particular, in a $\alg{A}$-enriched computational category, the
  two zero-functions $x:A\entail\zeroterm:B$ and $x:A\entail 0\cdot
  x:B$ are identified.
\end{remark}

\begin{definition}\rm
  Consider the typed language of Definition~\ref{def:stlc}, with the
  axiomatic equivalence of Table~\ref{tab:eqax} minus the very first
  rule, marked as $(*)$, stating $0\cdot u\eqax\zeroterm$. Let us call
  this language the \define{weak algebraic computational
    lambda-calculus} and the corresponding category of values
  $\wlinealcat$.
\end{definition}

\begin{theorem}
  1) The weak computational algebraic lambda-calculus is confluent.
  2) $\wlinealcat$ is a $\alg{A}$-enriched computational category.
  3) The weak computational algebraic lambda-calculus is an internal
  language for weak $\alg{A}$-enriched computational categories.
  \qed
\end{theorem}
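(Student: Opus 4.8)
The statement has three parts, each the \emph{weak} counterpart of a result already established for the strongly-normalizing calculus: part~1 mirrors the confluence Theorem~\ref{the:conf}, part~2 mirrors Theorem~\ref{the:lincat}, and part~3 mirrors Theorem~\ref{the:linealinternal}. The plan is to replay each of those proofs while tracking exactly how the single omitted rule $(*)$, namely $0\cdot s\eqax\zeroterm$, is (or is not) used. Write $L'=L\setminus\{(*)\}$ for the reduction system of the weak calculus. Note that part~2 must be read as asserting that $\wlinealcat$ is a \emph{weak} $\alg{A}$-enriched computational category: since $(*)$ is dropped we have in general $0\cdot v\not\eqax\zeroterm$, so the hom-objects cannot carry genuine additive inverses and the enrichment can only be by weak modules.

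For part~1 I would first observe that strong normalization is inherited for free: every $L'$-rewrite sequence is in particular an $L$-rewrite sequence, so the bound of Lemma~\ref{lem:efanorm} still applies a fortiori and the reducibility-candidate argument of Theorem~\ref{the:stnorm} goes through verbatim. By Newman's lemma it then suffices to establish local confluence of $L'$. Since removing a rule only removes overlaps, the (AC-)critical pairs of $L'$ form a subset of those of $L$, namely those in which $(*)$ plays no part; each is already joinable in $L$ by Theorem~\ref{the:conf}. The one thing to verify---and the main obstacle---is that the $L$-joining path of each such critical pair can be chosen inside $L'$, i.e. never routes through a $(*)$-step. I would argue this structurally: $(*)$ is the unique rule that \emph{collapses} a formal product $0\cdot s$ to $\zeroterm$, whereas every other rule only \emph{conserves} the formal scalar-times-term bookkeeping (the $A$ rules move scalars across contexts and the $F$ rules merely add them). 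Consequently, whenever scalar cancellation forces a subterm of the shape $0\cdot s$, all competing reduction routes funnel to that same $0\cdot s$ before $(*)$ could ever fire, so no peak is closed in $L$ exclusively by a $(*)$-step. Checking this claim against the finitely many non-$(*)$ critical pairs of Table~\ref{tab:betared}, working as before modulo $\eqac$, completes local confluence and hence confluence.

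For part~2 the construction is verbatim that of Theorem~\ref{the:lincat}: objects are types, morphisms are $\eqax$-classes of value judgments, and the cartesian closed structure together with the monad $M$ and its $\unit$, $\mult$, $\strength$ are reused unchanged---none of these definitions, nor the coherence equations they satisfy, invokes $(*)$. What must be re-examined is only the enrichment. Using the same formulas $0=(x{:}A\entail\canon{\zeroterm}{:}\monad B)$, $f+g=(x{:}A\entail\canon{\cocanon u+\cocanon v}{:}\monad B)$ and $\alpha\cdot f=(x{:}A\entail\canon{\alpha\cdot\cocanon u}{:}\monad B)$, I would check that these operations are well defined on the coarser $\eqax$-classes and that on each $\wlinealcat(A,\monad B)$ they satisfy the weak-$\alg{A}$-module axioms, i.e. a commutative monoid with a compatible scalar action, \emph{omitting} additive inverses. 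This is exactly the verification of Theorem~\ref{the:lincat} with the single line proving $\canon{0\cdot\cocanon u}\eqax\canon{\zeroterm}$ (which used $(*)$) deleted. Bilinearity of composition follows as before from the linearity rules of groups $A$ and $F$, which are untouched.

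Finally, part~3 splits into soundness and completeness. For soundness I would fix an arbitrary weak $\alg{A}$-enriched computational category $\cat C$, use the interpretation of Definition~\ref{def:linealdenot}, and verify that every equation of Table~\ref{tab:eqax} \emph{other than} $(*)$ is validated; these fall into the usual families---$\beta\eta$ and product laws (cartesian closure), the monad and strength coherences, and the additive and scalar equations (weak-module enrichment)---each discharged by the corresponding categorical axiom. The crucial point is that a weak model, by definition, does \emph{not} force $0\cdot f=0$, so dropping $(*)$ is exactly matched on the semantic side and nothing inconsistent becomes provable. For completeness I would invoke part~2, which exhibits $\wlinealcat$ itself as a weak model, together with the weak analogue of Theorem~\ref{the:linealinternal} giving $\denot{x{:}A\entail v{:}B}^v\eqax(x{:}A\entail v{:}B)$ and $\denot{x{:}A\entail t{:}B}^c\eqax(x{:}A\entail\canon t{:}\monad B)$; this makes the syntactic interpretation faithful, so any two judgments equalized in every weak model are already equalized in $\wlinealcat$, hence $\eqax$. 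Soundness and faithfulness together are the internal-language property. Of the three parts I expect the local-confluence check of part~1---confirming that $(*)$ is never load-bearing for joining a critical pair---to be the genuine obstacle; parts~2 and~3 are careful but essentially mechanical replays of Theorems~\ref{the:lincat} and~\ref{the:linealinternal}.
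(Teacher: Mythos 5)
Correct, and essentially the paper's own (here omitted) argument: the \qed\ indicates the author regards the theorem as a replay of Theorems~\ref{the:conf}, \ref{the:lincat} and~\ref{the:linealinternal} with the rule $(*)$ deleted, which is precisely what you carry out --- strong normalization inherited since the weak reduction is a subrelation of $L$, Newman's lemma plus the check that critical-pair joins never route through $(*)$, and the verbatim reuse of the categorical structure of $\linealcat$ with the enrichment weakened to commutative monoids with scalar action. Your reading of part~2 is also the right correction: since $0\cdot v\not\eqax\zeroterm$ in the weak theory, $\wlinealcat$ can only be enriched in weak $\alg{A}$-modules, so the statement must mean \emph{weak} $\alg{A}$-enriched computational category (as the consistency of part~3 with the paper's lemma on strong versus weak models confirms), exactly as you note.
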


\subsubsection{Extension of the language.}

Here, we assume that the language is extended to a call-by-value PCF
with a fixpoint combinator $Y$ and an algebraic structure, as follows
  \begin{align*}
    A,B &\isdefby \bit\bor\nat
    \bor A\duploli B\bor A\prodtype B\bor\putype
    \bor\monad{A},
    \\
    r,s,t &\isdefby x^A\bor \lambda x^A.s\bor st\bor 
    \prodterm{s,t} \bor \lprojp{s}\bor\rprojp{s}\bor\puterm\bor
    Y(s)\bor s+t\bor \alpha\cdot s\bor\zeroterm
    \bor\\
    &\phantom{{}\isdefby{}}\canon{s}\bor\cocanon{s}\bor
    \trueterm\bor\falseterm\bor
    \ifthenelse{r}{s}{t}\bor
    \natzero\bor\succterm(s)\bor\predterm(s)\bor
    \iszero(s),
  \end{align*}
    where $\alpha\in\alg{A}$. The meaning of the terms is the usual
    one for PCF\cite{plotkin77lcf}.
    The terms $\trueterm$ and $\falseterm$
    respectively stand for the boolean true and the boolean false; the
    term $\ifthenelse{r}{s}{t}$ is the test function on $r$; the term
    $\natzero$ stands for the natural number $0$; the term
    $\iszero(s)$ tests whether $s$ is null or not; $\predterm$ and
    $\succterm$ are respectively the predecessor and the successor
    function; finally $Y$ is the fixpoint combinator of
    Section~\ref{sec:fixpoint}.
    The notion of value is defined as in 
    Definition~\ref{def:untypedlang}.

  The rewrite system of Section~\ref{sec:opsemst} can be reformulated
  for the algebraic PCF. Again, apart from the rule $(*)$ of
  Table~\ref{tab:betared} which is not valid, all the other ones are
  correct. The reduction systems $E,F,A$ and $B$ of terms as the
  smallest congruent relations consistent with $\eqac$, satisfying the
  rules in Table~\ref{tab:betared} where $B$ is augmented with the
  rules $Y(v)\to \cocanon{v\canon{Y(v)}}$, $\succterm(\predterm(u))\to
  u$, $\iszero(\natzero)\to\trueterm$,
  $\iszero(\succterm(u))\to\falseterm$, $\lproj{\prodterm{u,v}}\to u$,
  $\rproj{\prodterm{u,v}}\to v$, $\ifthenelse{\trueterm}{s}{t}\to s$,
  $\ifthenelse{\falseterm}{s}{t}\to t$, In all the given rules, the
  terms $u,v$ are assumed to be values. We write $L'$ for the relation
  $A\cup B\cup E\cup F$, and as before we write $\to$ in place of
  $\to_{L'}$.

\begin{remark}
  Again, the rewrite system verifies subject
  reduction and progress. However, the system does not satisfy weak
  normalization. For example, the typing derivation $\entail Y\lambda
  x.\canon{\cocanon{x}}:A$ is valid, and the term $Y\lambda
  x.\canon{\cocanon{x}}$ reduces to itself.
\end{remark}

\begin{example}\label{ex:pow}
  An element of $M(\nat)$ can be regarded as the encoding of a 
  polynomial as follows. The function
\[
{\it Exp} = Y\lambda f.\canon{\lambda
  nx.\ifthenelse{\iszero(n)}{\cocanon{x}}{
    \cocanon{f}(\predterm(n))\,x}}
\]
of type $\nat\to(M\putype \to M\putype)$ takes
an integer $n$ and returns the map sending
$\canon{\alpha\cdot\puterm}$ to
$\canon{\alpha^{n}\cdot\puterm}$.
The map ${\it Pow}:M(\nat)\to(M\putype\to M\putype)$ defined as $\lambda
x.{\it Exp}\,{x}$ takes as input
$\canon{\sum_i\beta_i\cdot\overline{n}_i}$ and return the map sending
$\canon{\alpha\cdot\puterm}$ to
$\canon{(\sum_i\beta_i\,\alpha^{n_i})\cdot\puterm}$.
\end{example}

\subsubsection[Concrete models based on Set]{Concrete
  models based on $\Set$}
The category $\Set$ of sets and functions can be made into a weak
$\mathcal{A}$-enriched computational category. It is also possible to
model the PCF extension of the language:
$\denot{\putype}=\s{\puterm}$, the one-element set,
$\denot{\nat}=\mathbb{N}$, the set of natural numbers, and
$\denot{\bit}=\s{0,1}$, the two-elements sets. The denotation of the
product is the product in $\Set$ and the denotation of $A\duploli B$
is the set of $\Set$-function between $\denot{A}$ and $\denot{B}$.
The corresponding term constructs have their obvious meanings.
Provided that the ring $\mathcal{A}$ is endowed with a suitable notion
of limit (for example, taking $\mathcal{A}$ to be the reals with the
usual topology), we give two monads that can be used and an intuition
on their operational interpretation.

\paragraph{Strong convergence.} The monad $M_s$ defined as
$M_s(X)=\prodterm{X}_{\mathcal{A}}\cup\s{\bot}$, with
$\prodterm{X}_{\mathcal{A}}$ is the free weak $\alg{A}$-module
generated from $X$. We can define a fixpoint of $f:M_s(A)\to M_s(A)$
as $\lim_n f^n(\bot)$ if it exists, $\bot$ otherwise. We define
$\denot{Y(s)}$ as the fixpoint of~$\denot{s}$.
  
In this model, the morphism $\denot{x:A\entail \zeroterm:B}$ is the
constant function of value $0\in\prodterm{X}_{\mathcal{A}}$ and the
morphism $\denot{x:A\entail Y\lambda x.\canon{\cocanon{x}}:B}$ is the
constant function of value $\bot$. Moreover any non-converging
well-typed term $s$ have the same denotation $\bot$.

The set $M\mathbb{N}$ is $\bot$ together with all the finite linear
combinations $\sum_i\alpha_i\cdot n_i$.  The image of $M\mathbb{N}$ by
the operator $\denot{{\it Pow}}$ of Example~\ref{ex:pow} is a set of
functions
$\overline{p}:\mathcal{A}\cup\s{\bot}\to\mathcal{A}\cup\s{\bot}$
sending $\bot$ to $\bot$ and $\beta\in\mathcal{A}$ to $p(\beta)$. The
functions $p$ are either constant of value $\bot$ (when $f$ is the
image of $\bot$) or polynomials (when $f$ is the image of a linear
combination).

\paragraph{Weak convergence.} Define the semiring
$\mathcal{A}\cup\s{\omega}$ by extending the semiring $\mathcal{A}$
with a new element $\omega$. The sum and the multiplication are
extended as follows: $\alpha\omega=\omega$, $\alpha+\omega=\omega$.
We set $M_w(X)=({\mathcal{A}\cup\s{\omega}})^X$, the functions from
$X$ to $\mathcal{A}\cup\s{\omega}$. The fixpoint of $f:M_w(A)\to
M_w(A)$ is defined as the map sending $x\in X$ to $\lim_n f^n(0)(x)$
if it exists, $\omega$ otherwise. As previously, the denotation of
$Y(s)$ is the fixpoint of $\denot{s}$.
  
Here, $\denot{x:A\entail \zeroterm:B}$ and
$\denot{x:A\entail Y\lambda x.\canon{\cocanon{x}}:B}$ are the constant
functions of value $0\in\prodterm{X}_{\mathcal{A}}$. However, all
diverging terms do not have the same image. For example, the term
$Y\lambda x.\canon{\natzero + \succterm{\cocanon{x}}}$ of type $\nat$
corresponds to the element $f\in M_w(\mathbb{N})$ sending all
$n\in\mathbb{N}$ to $1\in\mathcal{A}$.

In this model, the image of $M(\mathbb{N})$ by ${\it Pow}$ is the set of
(generalized) entire functions $\mathcal{A}\to\mathcal{A}$, sending
$\beta$ to $\sum_i\alpha_i(\beta)^{n_i}$. By ``generalized'', we mean
that the functions may send some $\beta$ to $\omega$.

\section{Conclusion}

In this paper, we sketched the required structures for a semantics for
a typed algebraic lambda-calculus and discussed relation with previous
works. We showed that the problems occurring with divergence can be
solved by using a weak module. Finally, we described an algebraic PCF
and its interpretation in two concrete $\Set$-based models.

This raises the question of the complete description of the possible
operational behaviors of the algebraic PCF and the study of their
denotational semantics.

\section{Acknowledgments}

I would like to thank Gilles Dowek for introducing me to algebraic
calculi. I would also like to thank Pablo Arrighi and the research
group CAPP in Grenoble for helpful discussions.

\end{document}